\def\complexNumbers{\mathbb{C}}
\def\realNumbers{\mathbb{R}}
\def\constante{{\rm e}}
\def\constantj{{\rm j}}
\def\speedoflight{c}
\def\separationVar{a}
\def\naturalnum{k}
\def\numtotalcomb{\mathcal{C}}
\def\separationValueMax{S_{\rm max}}
\def\numberofTxBits{p}
\def\numofSelectorBits{p_1}
\def\numofModulationBits{p_2}
\def\freqVar{f}
\def\chirpindexi{i}
\def\chirpindexj{j}
\def\numberoftargets{R}
\def\diracfunction[#1]{\delta\left(#1\right)}
\def\timeArrival[#1]{\tau_{#1}}
\def\targetindex{i}
\def\channelimpulseresponse[#1]{h\left(#1\right)}
\def\channelimpulseresponseTaps[#1]{h_{#1}}
\def\channelfreqresponse[#1]{H_{#1}}
\def\distance[#1]{r_{#1}}
\def\distanceEst[#1]{\tilde{r}_{#1}}
\def\reflectioncoefficient[#1]{\alpha_{#1}}
\def\reflectioncoefficientEst[#1]{\tilde{\alpha}_{#1}}
\def\reflectioncoefficientSlack[#1]{{\dot{\alpha}}_{#1}}
\def\psksymbol[#1]{d_{#1}}
\def\psksize{H}
\def\coefficienta[#1]{{{d}}_{#1}}
\def\chirpn{n}
\def\chirpm{m}
\def\chirpphase[#1][#2]{{\psi_{#1}{(#2)}}}
\def\seqx[#1]{\textit{{x}}(#1)}
\def\seqy[#1]{\textit{{y}}(#1)}
\def\symbolPSK[#1]{s_{#1}}
\def\chirpmdetect{\hat{m}}
\def\chirpndetect{\hat{n}}
\def\symbolPSKdetect[#1]{\hat{s}_{#1}}
\def\symbolPSKdomain[#1]{\dot{s}_{#1}}
\def\dataSymbolAfterIDFTspread[#1]{\tilde{d}_{#1}}
\def\test[#1]{C_{#1}}
\def\numberofIndices{L}
\def\spectralEfficiency{\rho}
\def\transmittedSignalDiscrete[#1]{p\left[#1\right]}
\def\transmittedSignal[#1]{p\left(#1\right)}
\def\receivedSignalDiscrete[#1]{r\left[#1\right]}
\def\receivedSignal[#1]{r\left(#1\right)}
\def\receivedSignalDiscreteFrequency[#1]{b_{#1}}
\def\numberOfOccupiedSubcarriers{D}
\def\numberOfShifts{M}
\def\indexSubcarrier{k}
\def\indexpsksymbol{k}
\def\indexTime{m}
\def\basisFunction[#1]{B_{#1}(\timeVar)}
\def\amountOfShift[#1]{\tau_{#1}}
\def\dataSymbols[#1]{d_{#1}}
\def\angleSignal[#1]{\psi_{#1}(\timeVar)}
\def\instantaneousFrequency[#1]{F_{#1}(t)}
\def\besselFunctionFirstKind[#1][#2]{J_{#1}\left(#2\right)}
\def\lowerFrequency{L_{\rm d}}
\def\upperFrequency{L_{\rm u}}
\def\idftSize{N}
\def\CPSize{N_{\rm CP}}
\def\fsample{f_{\rm sample}}
\def\Tsample{T_{\rm sample}}
\def\indexSample{n}
\def\fourierSeries[#1]{c_{#1}}
\def\indexChirp{\ell}
\def\symbolDuration{T_{\rm chirp}}
\def\CPDuration{T_{\rm CP}}
\def\timeVar{t}
\def\delayVar{\tau}
\def\fresnelC[#1]{C(#1)}
\def\fresnelS[#1]{S(#1)}
\def\fcarrier{f_{\rm c}}
\def\complexNumbers{\mathbb{C}}
\def\realNumbers{\mathbb{R}}
\def\integersPositive{\mathbb{Z}^{+}}
\def\constante{{\rm e}}
\def\constantj{{\rm j}}
\def\psksize{H}
\def\indexIteration{n}
\def\indexIterationANF{l}
\def\orderMonomial[#1]{k_{#1}}
\def\coeffientsANF[#1]{c_{#1}}
\def\numberOfPointsForPSK{H}
\def\modulationSymbolF[#1]{m_{#1}}
\def\cardinalitySetOfOperators[#1]{{H}_{#1}}
\def\monomial[#1]{x_{#1}}
\def\scaleAexp[#1]{a_{#1}}
\def\scaleBexp[#1]{b_{#1}}
\def\scaleEexp[#1]{e_{#1}}
\def\angleexp[#1]{c_{#1}}
\def\angleexpAll[#1]{k_{#1}}
\def\angleScaleAexp[#1]{\dot{c}_{#1}}
\def\angleScaleBexp[#1]{\ddot{c}_{#1}}
\def\separationGolay[#1]{d_{#1}}
\def\scaleEexpBatch[#1][#2]{e_{#1}^{(#2)}}
\def\angleexpAllBatch[#1][#2]{k_{#1}^{(#2)}}
\def\scaleEexpPre[#1]{\dot{e}_{#1}}
\def\angleexpAllPre[#1]{\dot{k}_{#1}}
\def\eleGa[#1]{{a}_{#1}}
\def\eleGb[#1]{{b}_{#1}}
\def\apac[#1][#2]{\rho_{#1}(#2)}
\def\apacPositive[#1][#2]{\rho^{+}_{#1}(#2)}
\def\binaryAsignment[#1][#2]{b_{#1}^{(#2)}}
\def\eleSeqf[#1]{{f}_{#1}}
\def\eleSeqg[#1]{{g}_{#1}}
\def\eleSeqcf[#1]{{c}_{f,#1}}
\def\eleSeqcg[#1]{{c}_{g,#1}}
\def\scaleA[#1]{\alpha_{#1}}
\def\scaleB[#1]{\beta_{#1}}
\def\angleGolay[#1]{\omega_{#1}}
\def\angleScaleA[#1][#2]{\dot{\omega}_{#1}^{#2}}
\def\angleScaleB[#1][#2]{\ddot{\omega}_{#1}^{#2}}
\def\permutationShift[#1]{{\psi_{#1}}}
\def\permutationMono[#1]{{p_{#1}}}
\def\timeVar{t}
\def\funczArbitrary[#1]{z(#1)}
\def\funcaArbitrary[#1]{a(#1)}
\def\funcbArbitrary[#1]{b(#1)}
\def\coefficientArbitrary[#1]{k_{#1}}
\def\distanceToPoint[#1]{d_{#1}}
\def\ratioBetweenDistanceAndInner[#1]{l_{#1}}
\def\angleBetweenPointAndXaxis[#1]{\theta_{#1}}
\def\angleBetweenPointAndXYDiagonal[#1]{\psi_{#1}}
\def\angleBetweenPointAndSymPoint[#1]{\xi_{#1}}
\def\setPSKsymbols[#1]{\mathbb{S}_{{\rm PSK},#1}}
\def\vecArrangement[#1]{\textbf{b}_{#1}}
\def\seqGaIt[#1]{\textit{\textbf{a}}^{(#1)}}
\def\seqGbIt[#1]{\textit{\textbf{b}}^{(#1)}}
\def\seqGf[#1]{\textit{\textbf{f}}_{#1}}
\def\seqGg[#1]{\textit{\textbf{g}}_{#1}}
\def\seqGfdot[#1]{\bar{\textit{\textbf{f}}}_{#1}}
\def\seqGgdot[#1]{\bar{\textit{\textbf{g}}}_{#1}}
\def\seqSub[#1]{\textit{\textbf{h}}_{#1}}
\def\seqFirstOrderMonomial[#1]{\textit{\textbf{m}}_{#1}}
\def\seqToBeModulated[#1]{\textit{\textbf{s}}_{#1}}
\def\fourier[#1]{\mathcal{F}\{#1\}}
\def\flipConjugate[#1]{{{\tilde{#1}}}}
\def\expectationOperator[#1]{{\mathbb{E}}[#1]}
\def\operator[#1][#2]{\mathcal{O}_{#1}^{(#2)}}
\def\operatordot[#1][#2]{\bar{\mathcal{O}}_{#1}^{(#2)}}
\def\compositeOperatorF[#1][#2]{{F}_{#1}{(#2)}}
\def\compositeOperatorG[#1][#2]{{G}_{#1}{(#2)}}
\def\compositeOperatorFdot[#1][#2]{\bar{F}_{#1}{(#2)}}
\def\compositeOperatorGdot[#1][#2]{\bar{G}_{#1}{(#2)}}
\def\setOfOperators[#1]{{\mathfrak{J}}_{#1}}
\def\operatorBinary[#1][#2]{O_{#1}^{(#2)}}
\def\operatorSign[#1][#2]{{\rm S}_{#1}^{(#2)}}
\def\operatorScaleA[#1][#2]{{\rm{A}}_{#1}^{(#2)}}
\def\operatorScaleB[#1][#2]{{\rm{B}}_{#1}^{(#2)}}
\def\operatorAngle[#1][#2]{\Omega_{#1}^{(#2)}}
\def\operatorSeparation[#1][#2]{\Delta_{#1}^{(#2)}}
\def\operatorOrderA[#1][#2]{\dot{\rm O}_{#1}^{(#2)}}
\def\operatorOrderB[#1][#2]{\ddot{\rm O}_{#1}^{(#2)}}
\def\operatorAngleScaleA[#1][#2]{\dot{\Omega}_{#1}^{(#2)}}
\def\operatorAngleScaleB[#1][#2]{\ddot{\Omega}_{#1}^{(#2)}}
\def\operatorAngleConjScaleA[#1][#2]{\dot{\Upsilon}_{#1}^{(#2)}}
\def\operatorAngleConjScaleB[#1][#2]{\ddot{\Upsilon}_{#1}^{(#2)}}
\def\functionf[#1]{p^{(#1)}}
\def\functiong[#1]{q^{(#1)}}
\def\functionfdot[#1]{\bar{p}_{\indexIterationANF}^{(#1)}}
\def\functiongdot[#1]{\bar{q}_{\indexIterationANF}^{(#1)}}
\def\funcGfForANF[#1]{f_{#1}}
\def\funcGgForANF[#1]{g_{#1}}
\def\polySeq[#1][#2]{p_{#1}(#2)}
\newcommand\mydots{\hbox to 1em{.\hss.\hss.}}
\tikzset{%
  remember picture with id/.style={%
    remember picture,
    overlay,
    save picture id=#1,
  },
  save picture id/.code={%
    \edef\pgf@temp{#1}%
    \immediate\write\pgfutil@auxout{%
      \noexpand\savepointas{\pgf@temp}{\pgfpictureid}}%
  },
  if picture id/.code args={#1#2#3}{%
    \@ifundefined{save@pt@#1}{%
      \pgfkeysalso{#3}%
    }{
      \pgfkeysalso{#2}%
    }
  }
}
\def\savepointas#1#2{%
  \expandafter\gdef\csname save@pt@#1\endcsname{#2}%
}
\def\tmk@labeldef#1,#2\@nil{%
  \def\tmk@label{#1}%
  \def\tmk@def{#2}%
}
\newcounter{hatchNumber}
\DeclarePairedDelimiter\floor{\lfloor}{\rfloor}
\newif\ifAC@uppercase@first%
\def\Aclp#1{\AC@uppercase@firsttrue\aclp{#1}\AC@uppercase@firstfalse}%
\def\AC@aclp#1{%
	\ifcsname fn@#1@PL\endcsname%
	\ifAC@uppercase@first%
	\expandafter\expandafter\expandafter\MakeUppercase\csname fn@#1@PL\endcsname%
	\else%
	\csname fn@#1@PL\endcsname%
	\fi%
	\else%
	\AC@acl{#1}s%
	\fi%
}%
\def\Acp#1{\AC@uppercase@firsttrue\acp{#1}\AC@uppercase@firstfalse}%
\def\AC@acp#1{%
	\ifcsname fn@#1@PL\endcsname%
	\ifAC@uppercase@first%
	\expandafter\expandafter\expandafter\MakeUppercase\csname fn@#1@PL\endcsname%
	\else%
	\csname fn@#1@PL\endcsname%
	\fi%
	\else%
	\AC@ac{#1}s%
	\fi%
}%
\def\Acfp#1{\AC@uppercase@firsttrue\acfp{#1}\AC@uppercase@firstfalse}%
\def\AC@acfp#1{%
	\ifcsname fn@#1@PL\endcsname%
	\ifAC@uppercase@first%
	\expandafter\expandafter\expandafter\MakeUppercase\csname fn@#1@PL\endcsname%
	\else%
	\csname fn@#1@PL\endcsname%
	\fi%
	\else%
	\AC@acf{#1}s%
	\fi%
}%
\def\Acsp#1{\AC@uppercase@firsttrue\acsp{#1}\AC@uppercase@firstfalse}%
\def\AC@acsp#1{%
	\ifcsname fn@#1@PL\endcsname%
	\ifAC@uppercase@first%
	\expandafter\expandafter\expandafter\MakeUppercase\csname fn@#1@PL\endcsname%
	\else%
	\csname fn@#1@PL\endcsname%
	\fi%
	\else%
	\AC@acs{#1}s%
	\fi%
}%
\edef\AC@uppercase@write{\string\ifAC@uppercase@first\string\expandafter\string\MakeUppercase\string\fi\space}%
\def\AC@acrodef#1[#2]#3{%
	\@bsphack%
	\protected@write\@auxout{}{%
		\string\newacro{#1}[#2]{\AC@uppercase@write #3}%
	}\@esphack%
}%
\def\Acl#1{\AC@uppercase@firsttrue\acl{#1}\AC@uppercase@firstfalse}
\def\Acf#1{\AC@uppercase@firsttrue\acf{#1}\AC@uppercase@firstfalse}
\def\Ac#1{\AC@uppercase@firsttrue\ac{#1}\AC@uppercase@firstfalse}
\def\Acs#1{\AC@uppercase@firsttrue\acs{#1}\AC@uppercase@firstfalse}
\newtheorem{theorem}{Theorem}
\newtheorem{corollary}[theorem]{Corollary}
\acrodef{SNR}{signal-to-noise ratio}
\acrodef{IS}{index separation}
\acrodef{RMSE}{root-mean-square error}
\acrodef{CE}{constant-envelope}
\acrodef{TDRW}{time-diversity radar waveform}
\acrodef{RX}{receiver}
\acrodef{RXr}{radar receiver}
\acrodef{RXc}{communication receiver}
\acrodef{TX}{transmitter}
\acrodef{AoD}{angle-of-departure}
\acrodef{AoA}{angle-of-arrival}
\acrodef{SIC}{successive interference cancellation}
\acrodef{PAPR}{peak-to-average-power ratio}
\acrodef{APAC}{aperiodic autocorrelation}
\acrodef{OFDM}{orthogonal frequency division multiplexing}
\acrodef{DFT}{discrete Fourier transform}
\acrodef{DC}{direct current}
\acrodef{CS}{complementary sequence}
\acrodef{GCP}{Golay complementary pair}
\acrodef{ANF}{algebraic normal form}
\acrodef{PSK}{phase-shift keying}
\acrodef{QAM}{quadrature amplitude modulation}
\acrodef{QPSK}{quadrature phase-shift keying}
\acrodef{GDJ}{Golay-Davis-Jedwab}
\acrodef{PMEPR}{peak-to-mean envelope power ratio}
\acrodef{FFT}{fast Fourier transform}
\acrodef{BER}{bit-error ratio}
\acrodef{SNR}{signal-to-noise ratio}
\acrodef{4G}{Fourth Generation}
\acrodef{5G}{Fifth Generation}
\acrodef{NR}{New Radio}
\acrodef{LTE}{Long-Term Evolution}
\acrodef{PTS}{partial transmit sequences}
\acrodef{PSD}{power spectral density}
\acrodef{LDPC}{low-density parity check}
\acrodef{SE}{spectral efficiency}
\acrodef{eLAA}{enhanced licensed-assisted access}
\acrodef{NR-U}{NR-Unlicensed}
\acrodef{RM}{Reed-Muller}
\acrodef{AE}{autoencoder}
\acrodef{DNN}{deep neural network}
\acrodef{OFDM-AE}{OFDM-based autoencoder}
\acrodef{DL}{deep learning}
\acrodef{CP}{cyclic prefix}
\acrodef{AWGN}{additive white Gaussian noise}
\acrodef{P2C}{polar-to-Cartesian}
\acrodef{CFR}{channel frequency response}
\acrodef{CIR}{channel impulse response}
\acrodef{ReLU}{rectified linear unit}
\acrodef{MMSE}{minimum mean square error}
\acrodef{LMMSE}{linear minimum mean square error}
\acrodef{BPSK}{binary phase shift keying}
\acrodef{BLER}{block-error rate}
\acrodef{ML}{maximum likelihood}
\acrodef{PHY}{physical layer}
\acrodef{PA}{power amplifier}
\acrodef{IDFT}{inverse DFT}
\acrodef{DoF}{degrees-of-freedom}
\acrodef{IoT}{Internet-of-Things}
\acrodef{DFT-s-OFDM}{discrete Fourier transform-spread orthogonal frequency division multiplexing}
\acrodef{MMSE}{minimum mean square error}
\acrodef{FDE}{frequency-domain equalization}
\acrodef{FrFT}{fractional Fourier transform}
\acrodef{TF}{time-frequency}
\acrodef{BFSK}{binary frequency-shift keying}
\acrodef{CSS}{chirp spread spectrum}
\acrodef{BCSS}{binary chirp spread spectrum}
\acrodef{EVA}{Extended Vehicular A}
\acrodef{MIMO}{multi-input multi-output}
\acrodef{PIC}{parallel interference cancellation}
\acrodef{LoRa}{Long Range}
\acrodef{HF}{high-frequency}
\acrodef{FDSS}{frequency-domain spectral shaping}
\acrodef{DFRC}{dual-function radar \& communication}
\acrodef{OCB}{occupied channel bandwidth}
\acrodef{FSK}{frequency-shift keying}
\acrodef{RF}{radio-frequency}
\acrodef{IM}{index modulation}
\acrodef{BS}{base station}
\acrodef{MF}{matched filter}
\acrodef{CRC}[CSC]{circularly-shifted chirp}
\acrodef{FMCW}{frequency-modulated continuous-wave}
\def\BibTeX{{\rm B\kern-.05em{\sc i\kern-.025em b}\kern-.08em
    T\kern-.1667em\lower.7ex\hbox{E}\kern-.125emX}}
\begin{document}

\title{Index-Modulated Circularly-Shifted Chirps for Dual-Function Radar \& Communication Systems}

\author{
\IEEEauthorblockN{Safi Shams Muhtasimul Hoque\IEEEauthorrefmark{1}, Alphan \c{S}ahin\IEEEauthorrefmark{1}}
\IEEEauthorblockA{\IEEEauthorrefmark{1}Electrical  Engineering Department,
University of South Carolina, Columbia, SC, USA}
Email: shoque@email.sc.edu, asahin@mailbox.sc.edu}

\maketitle

\begin{abstract}
In this study, we analyze  \ac{IM} based on \acp{CRC} for \ac{DFRC} systems. We develop a \ac{ML} range estimator that considers multiple scatters. To improve the correlation properties of the transmitted waveform and estimation accuracy, we propose \ac{IS} which separates the \acp{CRC} apart in time. We theoretically show that the separation can be large under certain conditions without losing the \ac{SE}.  Our numerical results show that the \ac{IS} combined \ac{ML} and \ac{LMMSE}-based estimators can provide approximately 3~dB \ac{SNR} gain in some cases while improving estimation accuracy substantially without causing any \ac{BER} degradation at the communication receiver.
\end{abstract}

\begin{IEEEkeywords}
Chirps, DFRC, DFT-spread OFDM, PMEPR
\end{IEEEkeywords}

\acresetall

\section{Introduction}
The convergence of communication and radar functionalities within one wireless system addresses the under-utilized radar spectrum and the  co-existence between radars and communication networks
\cite{Paul_2017}.
It also offers a  new framework for wireless sensing applications such as gesture recognition and behavior prediction \cite{Chen_2019,Jaime2016}.
On the other hand, 
it  causes a trade-off between communications and radar as the resources may need to be shared between two applications. One way to circumvent this issue is to exploit communication signals for radar. However, the communication signals  can deteriorate the  accuracy of the estimation algorithms since their time and frequency characteristics are functions of the information bits \cite{Paul_2017}. In this study, we address this issue through \acp{CRC} and \ac{IM}.

In the literature, various techniques have been investigated to successfully employ communication signals for radar applications. For example, in \cite{Sturm_2009}, \ac{OFDM} is considered for simultaneous radar and communications, and several range profiles are demonstrated. In \cite{Braun_2010} and \cite{Bica_2017_radarconf},  \ac{ML}-based range and velocity estimators for a single target are developed  to achieve a finer resolution with \ac{OFDM}-based radar with arbitrary \ac{PSK}-symbols and its implementation aspects are discussed.  In \cite{Bica_2016}, a generalized multicarrier model is investigated for radar, and time/frequency diversity techniques are evaluated. The issue of high \ac{PMEPR} of multicarrier waveforms is also mentioned. An iterative algorithm based on filtering and clipping \cite{Stoica_2009} is investigated in \cite{Sharma_2019} to reduce the \ac{PMEPR} at a cost of the distorted correlation function of the transmitted waveform. In \cite{Sahan2020}, arbitrary sequences are sent through the unused subcarriers in an \ac{OFDM} system for radar functionality. In \cite{Aydogdu_2019}, the coexistence of \ac{FMCW} radars and communication systems are analyzed and a distributed networking protocol for interference mitigation is proposed. In \cite{sahin2020multifunctional}, \ac{FMCW} and OFDM waveform are transmitted simultaneously (i.e., transmit power is shared) and the fixed \ac{FMCW} is utilized as a reference symbol to estimate velocity and range. \Ac{IM}, originally proposed in \cite{basar_2013} for energy-efficient communications, has been considered and extended to multiple antennas in several works, e.g., \cite{BouDaher2016,Ma2020,Ma_2020spm} for \ac{DFRC} applications. In \cite{Ma2020} and \cite{Ma_2020spm}, \ac{IM} is utilized with a \ac{MIMO} radar by selecting a subset of subcarriers and/or transmit antennas. In  \cite{Ma_2020spm}, a minimal degradation at the \ac{RXr} with \ac{IM} is emphasized by comparing it with \ac{OFDM}. In \cite{Kumari_2018},  \acp{CS} in IEEE 802.11ad single carrier preamble are utilized for wireless sensing. To the best of our knowledge, \acp{CRC} with \ac{IM} have not been investigated rigorously for \ac{DFRC} in the literature.

In this study, we consider the scheme proposed in \cite{Safi_2020}, which  limits the \ac{PMEPR} theoretically and allows to one generate arbitrary \acp{CRC} by introducing a special \ac{FDSS} to \ac{DFT-s-OFDM} adopted in 3GPP  \ac{5G} \ac{NR} \cite{sahin_2020}. We first develop an \ac{ML} range estimator that considers multiple targets. We then discuss how to remove the impact of the waveform for accurate range estimation. To eliminate the spikes due to the multiple-chirp transmission within the estimation range, we propose  \ac{IS} that ensures that the chirps are well-separated in time. We theoretically obtain the limit of separation that does not reduce the \ac{SE}. 
We show that \ac{IS} not only improves the estimation accuracy but also improves the performance at the \ac{RXc} through numerical analyses.

The rest of the paper is organized as follows. In Section \ref{sec:systemModel}, we outline our system model. In Section~\ref{sec:schemechirp}, we discuss estimation algorithms and the \ac{IS}. In Section~\ref{sec:simulation}, we provide our numerical results. We conclude the paper in Section~\ref{sec:conclusion}.

{\em Notation:} The sets of complex numbers, real numbers, and positive integers are denoted by $\complexNumbers$, $\realNumbers$, and $\integersPositive$, respectively. 
Complex conjugation is denoted by $(\cdot)^*$.
The constants $\constantj$ and $\constante$ denote $\sqrt{-1}$ and Euler number, respectively. 

\section{System Model}
\label{sec:systemModel}

\subsection{Scenario}
\begin{figure}[t]
	\centering
	{\includegraphics[width =3.2in]{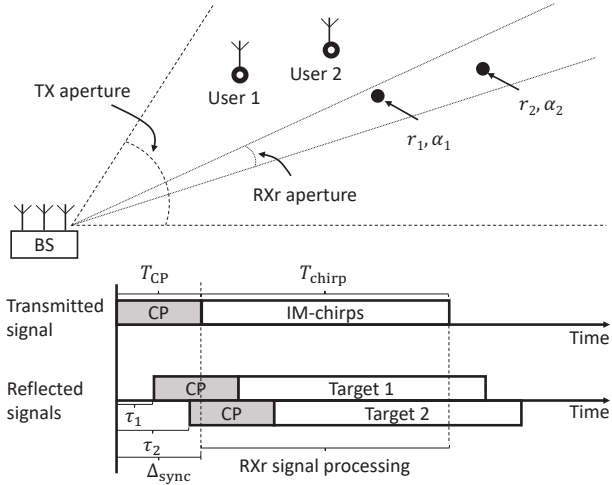}
	} 
	\caption{DFRC scenario and the corresponding timing diagram for the transmitted signal and the radar return for two targets.}
	\label{fig:timingdiagram}
\end{figure}

\def\txApperture{\theta_{\rm TX}}
\def\rxApperture{\theta_{\rm RX}}
\def\antennaGainRXr{G_{\rm rx,r}}
\def\antennaGainTX{G_{\rm tx}}
Consider a \ac{DFRC} scenario where a \ac{BS} broadcasts $\numberofTxBits$ information bits to users while utilizing the same signal for radar functionality as illustrated in \figurename~\ref{fig:timingdiagram}. To  resolve the angle information of the reflected paths while broadcasting information, we assume that the \ac{BS} utilizes a wider antenna aperture $\txApperture$ at the  \ac{TX} as compared to the \ac{RXr} antenna aperture $\rxApperture$  and sweeps the \ac{RXr} beam to different directions \cite{Jianli_2007}. 
In this study, we assume that the \ac{BS} operates in full-duplex mode  \cite{Kumari_2020} and the \ac{TX} and \ac{RXr} at the \ac{BS} are synchronized in time, i.e., the \ac{RXr} knows when the \ac{TX} starts transmission. 

Let $\reflectioncoefficient[\targetindex]\in\realNumbers$ and $\distance[\targetindex]\in\realNumbers$ be the path gain of the path \ac{TX}-to-$\targetindex$th target-to-\ac{RXr} and the distance between the $\targetindex$th target and \ac{BS}, respectively. 
By assuming a low-velocity environment  (e.g., an indoor environment) as compared to the duration of the transmitted waveforms, we express the time-invariant impulse response of the channel  as
\begin{align}
    \channelimpulseresponse[\delayVar]=\sum_{\targetindex=1}^{\numberoftargets} \reflectioncoefficient[\targetindex]\diracfunction[{\delayVar-\timeArrival[\targetindex]}]~, 
    \label{eq:CIRreal}
\end{align}
where $\timeArrival[\targetindex]=2\distance[\targetindex]/\speedoflight$ and $\timeArrival[\targetindex]\le\timeArrival[\targetindex+1]$, $\numberoftargets$ is the number of reflections, and $\speedoflight$ is the speed of light. Our goal is to estimate $\{\distance[\targetindex]\}$ while using the same signal for broadcasting information.  We assume that the maximum number of detectable targets is known at the receiver.

\def\selectedChirpIndex[#1]{i_{#1}}
\def\indexSet{\mathcal{I}}
\def\pskSet{\mathcal{S}}
\def\chirpSet{\mathbb{W}}

\subsection{Modulation and Waveform}
\label{subsec:dftsofdmchirp}
In this study, we utilize the scheme proposed in \cite{Safi_2020} as \ac{DFRC} waveform.
In this scheme,  $\numberofTxBits$  information bits are first grouped into two parts: $\numofSelectorBits$ selector bits to choose $\numberofIndices$ distinct chirps from a set  $\chirpSet=\{\basisFunction[\indexTime]|\indexTime=0,1,\mydots,\numberOfShifts-1\}$  and $\numofModulationBits$ bits for $\numberofIndices$ different $\numberOfPointsForPSK$-\ac{PSK} symbols. 
Let $\indexSet=\{\selectedChirpIndex[0],\selectedChirpIndex[1],\mydots,\selectedChirpIndex[\numberofIndices-1]\}$ and $\pskSet=\{\symbolPSK[0],\symbolPSK[1],\mydots,\symbolPSK[\numberofIndices-1]\}$ be the sets of indices of selected chirps and the corresponding $\numberOfPointsForPSK$-\ac{PSK} symbols, respectively. The complex baseband signal $\transmittedSignal[\timeVar]$ can then be expressed as
\begin{align}
\transmittedSignal[\timeVar] = \frac{1}{\sqrt{\numberofIndices}}\sum_{\indexChirp=0}^{\numberofIndices-1} \symbolPSK[\indexChirp]\basisFunction[{\selectedChirpIndex[\indexChirp]}]
~,
\label{eq:originalWaveform}
\end{align}
where $\basisFunction[\indexTime]=\constante^{\constantj\angleSignal[\indexTime]}$ is the $\indexTime$th circular translation of an  arbitrary band-limited function with the duration  $\symbolDuration$, where $\amountOfShift[\indexTime]=\indexTime/\numberOfShifts\times\symbolDuration$  is the amount of circular shift for $m=\{0,1,\mydots, \numberOfShifts-1\}$ and the maximum frequency deviation of $\basisFunction[\indexTime]$ around the carrier frequency is ${\numberOfOccupiedSubcarriers}/{2\symbolDuration}$. 
Since $\numberofIndices$ indices can be chosen from $\numberOfShifts$ indices in ${\binom{\numberOfShifts}{\numberofIndices}}$ ways and  $\numberofIndices$ $\numberOfPointsForPSK$-\ac{PSK} symbols  are utilized, the scheme allows $\numberofTxBits=\numofSelectorBits+\numofModulationBits$  information bits to be transmitted, where $\numofSelectorBits=\floor{\log_{2}\left({\binom{\numberOfShifts}{\numberofIndices}}\right)}$ and $\numofModulationBits= \numberofIndices\log_{2}(\numberOfPointsForPSK)$.

By using Fourier series, we can approximately express $\basisFunction[\indexTime]$ as
\begin{align}
\basisFunction[{\indexTime}]  \approx \sum_{\indexSubcarrier=\lowerFrequency}^{\upperFrequency} \fourierSeries[\indexSubcarrier] \constante^{\constantj2\pi\indexSubcarrier\frac{\timeVar-\amountOfShift[\indexTime]}{\symbolDuration}}~,
\label{eq:basisDecompose}
\end{align}
where $\lowerFrequency<0$ and $\upperFrequency>0$, and $\fourierSeries[\indexSubcarrier]$ is the $\indexSubcarrier$th Fourier coefficient of $\basisFunction[0]$. The approximation in \eqref{eq:basisDecompose} is accurate for  $\lowerFrequency<-\numberOfOccupiedSubcarriers/2$ and $\upperFrequency>\numberOfOccupiedSubcarriers/2$ since $\basisFunction[{\amountOfShift[\indexTime]}]$ is a band-limited function. By sampling $\transmittedSignal[\timeVar]$ with the period of $\Tsample=1/\fsample={\symbolDuration}/{\idftSize}$, \eqref{eq:originalWaveform} can be approximately expressed in discrete time  as \cite{sahin_2020}
\begin{align}
\transmittedSignalDiscrete[\indexSample]=&\frac{1}{\sqrt{\numberofIndices}}\underbrace{\sum_{\indexSubcarrier=\lowerFrequency}^{\upperFrequency}\underbrace{\fourierSeries[\indexSubcarrier]\underbrace{\sum_{\indexTime=0}^{\numberOfShifts-1} \dataSymbols[\indexTime] 
			\constante^{-\constantj2\pi \indexSubcarrier \frac{\indexTime}{\numberOfShifts}}}_{\numberOfShifts\text{-point DFT}}}_{\text{Frequency-domain spectral shaping}}
	\constante^{\constantj2\pi \indexSubcarrier \frac{\indexSample}{\idftSize}}}_{\idftSize\text{-point IDFT with zero-padding} }~,
\label{eq:chirpWave}
\end{align}
where $\dataSymbols[{\selectedChirpIndex[\indexChirp]}]=\symbolPSK[\indexChirp]$, $\dataSymbols[i\notin\indexSet]=0$, and $\idftSize>\numberOfShifts=\upperFrequency-\lowerFrequency+1>\numberOfOccupiedSubcarriers$. Therefore, \eqref{eq:originalWaveform} can be implemented with a  \ac{DFT-s-OFDM} transmitter with an \ac{FDSS} that leads to chirps and demodulated with a typical \ac{DFT-s-OFDM} receiver as shown in \figurename~\ref{fig:txrxa}. To facilitate the equalization at the \ac{RXc}, we prepend a \ac{CP} to the symbol with the duration of $\CPDuration=\CPSize\Tsample$, where $\CPSize$ is the number of samples in the \ac{CP} duration. Note that this scheme results in a signal where its \ac{PMEPR} is always equal or less than $10\log_{10}(\numberofIndices)$~dB and leads to \acp{CS} for $\numberofIndices=2$ \cite{Safi_2020}. Also, $\fourierSeries[\indexSubcarrier]$ is given in closed-form by using Fresnel integrals and Bessel functions for linear and sinusoidal chirps in \cite{sahin_2020}, respectively. 

\begin{figure*}
	\centering
	{\includegraphics[width =7in]{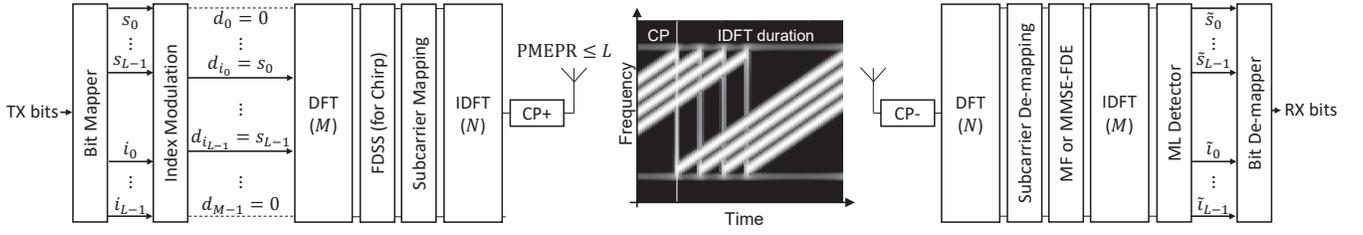}	} 
	\caption{Transmitter and receiver block diagrams, and an signal synthesized with the transmitter for $\numberofIndices=4$ chirps.} 
	\label{fig:txrxa}
	\vspace{-2mm}
\end{figure*}

\def\noiseDiscrete[#1]{\eta_{#1}}
\def\noiseVariance{\sigma^2_{\rm n}}
\def\deltaDelay[#1]{\Delta_{#1}}
\def\rectangularFunction[#1]{\Pi\left( #1 \right)}

At the \ac{RXr}, we assume $\timeArrival[\numberoftargets]\le\CPDuration$ and an ideal phase/frequency synchronization between the \ac{TX} and \ac{RXr} carriers (e.g., fed through the same oscillator). After removing the \ac{CP} and applying \ac{DFT}, the received signal  can be expressed as
\begin{align}
\receivedSignalDiscreteFrequency[\indexSubcarrier] = {{\channelfreqresponse[\indexSubcarrier]\fourierSeries[\indexSubcarrier]{\sum_{\indexTime=0}^{\numberOfShifts-1} \dataSymbols[\indexTime] 
			\constante^{-\constantj2\pi \indexSubcarrier \frac{\indexTime}{\numberOfShifts}}}}
	\constante^{\constantj2\pi \indexSubcarrier \frac{\indexSample}{\idftSize}}}+\noiseDiscrete[\indexSample]~,
\label{eq:rxFreqSymbols}
\end{align}
where $\noiseDiscrete[\indexSample]$ is zero-mean \ac{AWGN} with the variance of $\noiseVariance$ and $\channelfreqresponse[\indexSubcarrier]$ is the \ac{CFR}  given by
\begin{align} 
    \channelfreqresponse[\indexSubcarrier] &=  \int \channelimpulseresponse[\delayVar]\constante^{-\constantj2\pi\freqVar \delayVar}d\delayVar\Bigr\rvert_{\freqVar=\fcarrier+\frac{\indexSubcarrier}{\symbolDuration}}\\&= \sum_{\targetindex=1}^{\numberoftargets}\reflectioncoefficient[\targetindex]\constante^{-\constantj2\pi\fcarrier\timeArrival[\targetindex]}
    \constante^{-\constantj2\pi\indexSubcarrier\frac{\timeArrival[\targetindex]}{\symbolDuration}},
\end{align}
Based on our system model, the maximum range is equal to $\speedoflight\times \CPDuration/2$ meters.

\section{Range Estimation with Index-Modulated CSCs}
\label{sec:schemechirp}

\def\completeMatrix{{\rm \bf W}}
\def\noiseVector{{\rm \bf n}}
\def\fdssVector{{\rm \bf c}}
\def\channelFVector{{\rm \bf h}}
\def\rxSymbolsVector{{\rm \bf b}}
\def\dataVector{{\rm \bf d}}
\def\DFTmtx[#1]{{{\rm \bf D}_{#1}}}
\def\diagonalMatrixFromVector[#1]{\text{diag}\{ {#1} \}}
\def\delayMtx{{{\rm \bf T}}}
\def\delayMtxSlack{{{\rm \bf \dot{T}}}}
\def\amplitudeVectorSlack{{{\rm \bf \dot{a}}}}
\def\delayVector[#1]{{{\rm \bf t}_{#1}}}
\def\delayElement[#1]{T_{#1}}
\def\amplitudeVector{{\rm \bf a}}
\def\timeArrivalEst[#1]{\tilde{\tau}_{#1}}
\def\timeArrivalSlack[#1]{{\dot{\tau}}_{#1}}
\def\symbolsInFrequency[#1]{w_{#1}}
\def\symbolsVectorinFrequency{{\rm \bf w}}

\def\channelFVectorEst{{{\rm \bf \tilde{h}}}}

The received symbols in \eqref{eq:rxFreqSymbols} can be re-expressed as
\begin{align}
	\rxSymbolsVector= \underbrace{\diagonalMatrixFromVector[{ \fdssVector }]\diagonalMatrixFromVector[{  \DFTmtx[\numberOfShifts] \dataVector }]}_{\completeMatrix\triangleq\diagonalMatrixFromVector[{\symbolsVectorinFrequency}]} \channelFVector + \noiseVector~,
\end{align}
where $\rxSymbolsVector^{\rm T}=[\receivedSignalDiscreteFrequency[\lowerFrequency],  \mydots,\receivedSignalDiscreteFrequency[\upperFrequency] ]$, $\fdssVector^{\rm T}=[\fourierSeries[\lowerFrequency], \mydots,\fourierSeries[\upperFrequency] ]$, $\DFTmtx[\numberOfShifts] $ is the $\numberOfShifts$-point \ac{DFT} matrix,
$\dataVector^{\rm T}=[\dataSymbols[\lowerFrequency],  \mydots,\dataSymbols[\upperFrequency] ]$, 
 $\noiseVector^{\rm T}=[\noiseDiscrete[\lowerFrequency], \noiseDiscrete[\lowerFrequency+1], \mydots,\noiseDiscrete[\upperFrequency] ]$,
 $\symbolsVectorinFrequency^{\rm T}=[\symbolsInFrequency[\lowerFrequency], \mydots, \symbolsInFrequency[\upperFrequency]]$ is the response of the waveform in the frequency, and
 $\channelFVector^{\rm T}=[\channelfreqresponse[\lowerFrequency], \mydots,\channelfreqresponse[\upperFrequency]]$ which can be expressed as 
\begin{align}
\channelFVector = \delayMtx\amplitudeVector~,
\end{align}
where $\delayMtx=[\delayVector[{\timeArrival[1]}]~\delayVector[{\timeArrival[2]}]~\cdots~\delayVector[{\timeArrival[\numberoftargets]}]]\in\complexNumbers^{\numberOfShifts\times\numberoftargets}$ is the delay matrix and $\delayVector[{\timeArrival[\targetindex]}]=\constante^{-\constantj2\pi\fcarrier\timeArrival[\targetindex]}\times[ \constante^{-\constantj2\pi\lowerFrequency\frac{\timeArrival[\targetindex]}{\symbolDuration}},\cdots,\constante^{-\constantj2\pi\upperFrequency\frac{\timeArrival[\targetindex]}{\symbolDuration}}]$, and $\amplitudeVector=[\reflectioncoefficient[1], \reflectioncoefficient[2], \mydots,\reflectioncoefficient[\numberoftargets] ]$. For our \ac{DFRC} scenario, the sets $\pskSet$ and $\indexSet$ are available at the \ac{RXr}. Therefore, the symbols on the subcarriers, i.e., $\symbolsVectorinFrequency$, can be used as reference symbols.  Hence, in \ac{AWGN} channel, the \ac{ML}-based delay estimation problem can be expressed as
\begin{align}
\{\timeArrivalEst[\targetindex],\reflectioncoefficientEst[\targetindex]\}&=\arg\min_{
\substack{ \{\timeArrivalSlack[\targetindex], \reflectioncoefficientSlack[\targetindex]\} \\ \targetindex=1,\mydots,\numberoftargets}
 } \lVert  \rxSymbolsVector-\completeMatrix\delayMtxSlack\amplitudeVectorSlack \rVert_2^2
 \nonumber\\&=\arg\min_{
\substack{ \{\timeArrivalSlack[\targetindex], \reflectioncoefficientSlack[\targetindex]\} \\ \targetindex=1,\mydots,\numberoftargets}
 } \lVert\completeMatrix\delayMtxSlack\amplitudeVectorSlack \rVert_2^2 -2 \Re\{\amplitudeVectorSlack^{\rm H}\delayMtxSlack^{\rm H}\completeMatrix^{\rm H}\rxSymbolsVector\}~.
 \label{eq:MLest}
\end{align}
For a single target, \eqref{eq:MLest} can be reduced to
\begin{align}
\timeArrivalEst[1] = \arg\max_{\timeArrivalSlack[1]} |\Re \{\delayVector[{\timeArrivalSlack[1]}]^{\rm H} \completeMatrix^{\rm H}\rxSymbolsVector \}|~,
\label{eq:matchedFilter}
\end{align}
where $\reflectioncoefficientEst[1]=  \Re \{\delayVector[{\timeArrivalEst[1] }]^{\rm H} \completeMatrix^{\rm H}\rxSymbolsVector \}/(\symbolsVectorinFrequency^{\rm H}\symbolsVectorinFrequency)$ by equating the derivative of cost function with respect to $\timeArrivalSlack[1]$ and $ \reflectioncoefficientSlack[1]$ to zeros. The absolute value in \eqref{eq:matchedFilter} is due to the fact that $\reflectioncoefficient[1]$ can be negative or positive.
The solution of \eqref{eq:matchedFilter} corresponds to the optimum \ac{MF} and the objective function can be evaluated via a computer search. Note that $\delayVector[{\timeArrivalEst[\targetindex] }]$ is a function of the carrier frequency. Thus, the search should consider narrow enough steps to obtain the maximum. In this study, we utilize a refinement procure that increases the number of points around the coarse estimate point.

The solution of \eqref{eq:MLest} is not trivial for $\numberoftargets>1$. Therefore, we utilize \eqref{eq:matchedFilter} and consider an iterative procedure by subtracting the information related to $(\indexIteration-1)$th  target from the signal as
\begin{align}
\rxSymbolsVector^{(\indexIteration)}=\rxSymbolsVector^{(\indexIteration-1)}-\reflectioncoefficientEst[\indexIteration-1]\completeMatrix\delayVector[{\timeArrivalEst[\indexIteration-1]}],
\label{eq:iterations}
\end{align}
where $\rxSymbolsVector^{(1)}=\rxSymbolsVector$.  After $\timeArrivalEst[\targetindex]$ is estimated, the corresponding range can be obtained as $\distanceEst[\targetindex]=\timeArrivalEst[\targetindex]\times\speedoflight/2$.

The reward function in \eqref{eq:matchedFilter} is a function of the waveform. Since we transmit multiple \acp{CRC} in our scheme, additional spikes occur in the auto-correlation function of the waveform depending on the indices of selected chirps. Hence, the reward function in \eqref{eq:matchedFilter} can be high at different values of $\timeArrivalSlack[1]$ for $\numberofIndices>1$ although there is a single target. On the other hand, the successful cancellation of the ($\indexIteration$-1)th reflected signal in \eqref{eq:iterations} relies on the accurate estimate of the reflection coefficient. Where there are multiple targets, this issue can cause an inaccurate estimation of the reflection coefficient.
In addition, remaining spikes under inaccurate cancellation can also degrade the accuracy of the delay estimation for the next target. To address this problem, we investigate two solutions:  the \ac{IS} unique to the investigated scheme and the utilization of the \ac{LMMSE}-based channel estimate for the range estimation.

\def\identityMatrix{{\rm \bf I}}

\def\separationValue{S}
\def\distanceIndex[#1]{{\mathcal D}(#1)}
\subsection{Solution 1: Index Separation}
\label{subsec:IS}
The \ac{IS} mitigates the impact of  waveform on the range estimation by constraining the scheme in \cite{Safi_2020} such that \acp{CRC} are sufficiently separated apart in time. Let $\distanceIndex[{\selectedChirpIndex[i],\selectedChirpIndex[j]}]\triangleq\min(|\selectedChirpIndex[i]-\selectedChirpIndex[j]|,\numberOfShifts-|\selectedChirpIndex[i]-\selectedChirpIndex[j]|)$  be the distance between two indices. As discussed in Section~\ref{sec:systemModel}, the maximum detection range depends on $\CPSize$. Therefore, if $\distanceIndex[{\selectedChirpIndex[i],\selectedChirpIndex[j]}]> \CPSize\times \numberOfShifts/\idftSize$ holds true for any combination, no spike  due to the simultaneous transmission of chirps occurs within the duration of \ac{CP}.

Let $\numtotalcomb$ denote the cardinality of the set consisting of all  index combinations where $\distanceIndex[{\selectedChirpIndex[\chirpindexi],\selectedChirpIndex[\chirpindexj]}]\ge\separationValue$ for $i,j\in{1,2,\mydots,\numberofIndices}$ where $\separationValue$ is the minimum distance between two selected indices.
\begin{theorem}
\label{th:numofCombinations}
For $\numberofIndices=2$, $\numtotalcomb=\binom{\numberOfShifts}{2}-\numberOfShifts(\separationValue-1)~$.
\label{th:safisEquation}
\end{theorem}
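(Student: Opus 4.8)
The plan is to use complementary counting on the cycle of $\numberOfShifts$ indices. Because the governing metric is the circular distance $\distanceIndex[{\selectedChirpIndex[\chirpindexi],\selectedChirpIndex[\chirpindexj]}]=\min(|\selectedChirpIndex[\chirpindexi]-\selectedChirpIndex[\chirpindexj]|,\numberOfShifts-|\selectedChirpIndex[\chirpindexi]-\selectedChirpIndex[\chirpindexj]|)$, I would regard the indices $\{0,1,\mydots,\numberOfShifts-1\}$ as vertices arranged on a cycle and obtain the number of admissible combinations as the total number of pairs minus the forbidden ones, i.e., those whose circular distance lies in $\{1,2,\mydots,\separationValue-1\}$. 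For $\numberofIndices=2$ a combination is just an unordered pair, so the total count is $\binom{\numberOfShifts}{2}$, and it remains to enumerate and subtract the forbidden pairs.

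The key step is to count, for each fixed distance $d$, the number of unordered pairs at circular distance exactly $d$. Fixing one index $a$, the indices at circular distance $d$ are $a+d$ and $a-d$ taken modulo $\numberOfShifts$; these are two \emph{distinct} indices precisely when $2d\not\equiv 0\pmod{\numberOfShifts}$, i.e., when $d\neq\numberOfShifts/2$. Hence, for every $d<\numberOfShifts/2$, each of the $\numberOfShifts$ indices has exactly two partners at distance $d$, producing $2\numberOfShifts$ ordered pairs and therefore $\numberOfShifts$ unordered pairs, since each unordered pair is tallied once from each of its two endpoints. Summing over the forbidden distances $d=1,2,\mydots,\separationValue-1$ then gives $\numberOfShifts(\separationValue-1)$ forbidden pairs, and subtracting from the total yields $\numtotalcomb=\binom{\numberOfShifts}{2}-\numberOfShifts(\separationValue-1)$.

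The main point to handle carefully is the wrap-around boundary case $d=\numberOfShifts/2$, where an index has only a single diametrically opposite partner and the per-distance count drops from $\numberOfShifts$ to $\numberOfShifts/2$. The clean formula therefore presumes the regime $\separationValue-1<\numberOfShifts/2$ (equivalently $2\separationValue\le\numberOfShifts+1$), so that none of the forbidden distances reaches the antipodal shell; this is the natural regime here, since $\separationValue$ is bounded by the \ac{CP}-driven constraint $\separationValue\le\CPSize\numberOfShifts/\idftSize$, which is small relative to $\numberOfShifts$. I would state this hypothesis explicitly and note that no double counting occurs, each unordered pair appearing exactly twice in the ordered tally. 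Outside this regime the linear term $\numberOfShifts(\separationValue-1)$ would require a correction for the $d=\numberOfShifts/2$ contribution, so flagging the validity range is the one nontrivial obstacle in an otherwise elementary counting argument.
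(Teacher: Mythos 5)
Your proof is correct, and it reaches the formula by a genuinely different decomposition than the paper's. The paper counts the admissible pairs \emph{directly}: it first unfolds the circular constraint into the linear band $\separationValue\le|\selectedChirpIndex[\chirpindexi]-\selectedChirpIndex[\chirpindexj]|\le\numberOfShifts-\separationValue$, observes that exactly $\numberOfShifts-\separationVar$ unordered pairs have plain difference $\separationVar$, and evaluates the arithmetic series $\numtotalcomb=\sum_{\separationVar=\separationValue}^{\numberOfShifts-\separationValue}(\numberOfShifts-\separationVar)=\binom{\numberOfShifts}{2}-\numberOfShifts(\separationValue-1)$. You instead count the \emph{complement} by circular-distance shells: each forbidden distance $d\in\{1,\mydots,\separationValue-1\}$ contributes exactly $\numberOfShifts$ unordered pairs (two partners per vertex, each pair tallied from both endpoints), so the deficit $\numberOfShifts(\separationValue-1)$ appears with no series to evaluate. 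Your route buys two things: the closed form becomes transparent (each unit increase in $\separationValue$ removes exactly one shell of $\numberOfShifts$ pairs), and it forces the validity condition into the open, since the shell count degenerates to $\numberOfShifts/2$ at the antipodal distance $d=\numberOfShifts/2$, so the formula presumes $\separationValue-1<\numberOfShifts/2$, i.e., $2\separationValue\le\numberOfShifts+1$. The paper's proof carries the same restriction implicitly --- its band $[\separationValue,\numberOfShifts-\separationValue]$ must be nonempty, and for $2\separationValue\ge\numberOfShifts+2$ its summation identity fails (the closed form goes negative while the true count is $0$) --- but never states it; your $2\separationValue\le\numberOfShifts+1$ is in fact the exact range on which the theorem's formula holds. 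As you note, the restriction is harmless here, since the \ac{CP} constraint and Corollary~\ref{co:Safiscorol} keep $\separationValue\le\separationValueMax\approx\numberOfShifts/4$. What the paper's direct count buys in exchange is the avoidance of any modular/parity case analysis: reducing the circular condition to plain differences handles the wrap-around once and for all, at the cost of the final series evaluation.
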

\begin{proof}
$\distanceIndex[{\selectedChirpIndex[\chirpindexi],\selectedChirpIndex[\chirpindexj]}]\ge\separationValue$ implies that $\separationValue\leq|\selectedChirpIndex[\chirpindexi]-\selectedChirpIndex[\chirpindexj]|\leq(\numberOfShifts-\separationValue)$. On the other hand, the number of  $\{\selectedChirpIndex[\chirpindexi],\selectedChirpIndex[\chirpindexj]\}$ combinations for $|\selectedChirpIndex[\chirpindexi]-\selectedChirpIndex[\chirpindexj]|=\separationVar$ is $\numberOfShifts-\separationVar$.  Hence, the total number of $\{\selectedChirpIndex[\chirpindexi],\selectedChirpIndex[\chirpindexj]\}$ combinations for $\distanceIndex[{\selectedChirpIndex[\chirpindexi],\selectedChirpIndex[\chirpindexj]}]\ge\separationValue$ is then equal to 
$   \numtotalcomb=\sum_{\separationVar=\separationValue}^{\numberOfShifts-\separationValue} (\numberOfShifts-\separationVar)=\binom{\numberOfShifts}{2}-\numberOfShifts(\separationValue-1)$.
\end{proof}
We do not have a closed-form solution of $\numtotalcomb$ for $\numberofIndices>2$. Note that the number of spikes in the auto-correlation function of the transmitter waveform and the \ac{PMEPR}  of $\transmittedSignal[\timeVar]$ increase with $\numberofIndices$. With this concern in mind, we limit our focus on $\numberofIndices=\{2\}$ for the \ac{IS} in this study.

For a given $\separationValue$, the \ac{SE} of the investigated scheme can be calculated as $\spectralEfficiency=\floor{{\rm log}_2(\numtotalcomb\times \psksize^\numberofIndices)}/\numberOfShifts$. Hence, one interesting question is that what is the largest $\separationValue$ such that the \ac{SE} still remains at the maximum for $\separationValue=1$ and $\numberofIndices=2$? Theorem~\ref{th:numofCombinations} provides insight into the largest $\separationValue$ as follows:
\begin{corollary}
Let $\floor{\log_{2}{\binom{\numberOfShifts}{2}}}=\floor{\log_{2}{\numtotalcomb}}$. For $\numberofIndices=2$,  $\separationValue\le\separationValueMax\triangleq\floor{1+\frac{\binom{\numberOfShifts}{2}-2^{\floor{log_2{\binom{\numberOfShifts}{2}}}}}{\numberOfShifts}}$~.
\label{co:Safiscorol}
\end{corollary}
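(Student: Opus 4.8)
The plan is to convert the hypothesis $\floor{\log_{2}\binom{\numberOfShifts}{2}}=\floor{\log_{2}\numtotalcomb}$ into an explicit ceiling on $\separationValue$. First I would note the interpretation of this hypothesis: since the modulation bits contribute $\numberofIndices\log_{2}\psksize$ to $\floor{\log_{2}(\numtotalcomb\,\psksize^{\numberofIndices})}$ as an $\separationValue$-independent integer offset (for $\psksize$ a power of two), the \ac{SE} $\spectralEfficiency$ is driven entirely by the selector-bit count $\floor{\log_{2}\numtotalcomb}$. Its maximum over $\separationValue$ is attained at $\separationValue=1$, where Theorem~\ref{th:numofCombinations} gives $\numtotalcomb=\binom{\numberOfShifts}{2}$ and thus $\floor{\log_{2}\numtotalcomb}=\floor{\log_{2}\binom{\numberOfShifts}{2}}$. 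The stated hypothesis therefore encodes exactly the requirement that $\spectralEfficiency$ has not dropped below this peak.

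Writing $n_0\triangleq\floor{\log_{2}\binom{\numberOfShifts}{2}}$, the central step is to observe that the hypothesis $\floor{\log_{2}\numtotalcomb}=n_0$ is, by the definition of the floor, equivalent to $\numtotalcomb\ge 2^{n_0}$ (the companion bound $\numtotalcomb<2^{n_0+1}$ being automatic, since $\numtotalcomb\le\binom{\numberOfShifts}{2}<2^{n_0+1}$ for every admissible $\separationValue\ge 1$). Substituting the closed form $\numtotalcomb=\binom{\numberOfShifts}{2}-\numberOfShifts(\separationValue-1)$ from Theorem~\ref{th:numofCombinations} into $\numtotalcomb\ge 2^{n_0}$ and rearranging yields $\separationValue-1\le(\binom{\numberOfShifts}{2}-2^{n_0})/\numberOfShifts$. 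Because $\separationValue$ is a positive integer, I would then take the floor of the right-hand side to conclude $\separationValue\le\floor{1+(\binom{\numberOfShifts}{2}-2^{n_0})/\numberOfShifts}=\separationValueMax$, which is the claim.

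I do not expect a genuine obstacle; the result is essentially a one-line algebraic consequence of Theorem~\ref{th:numofCombinations} once the floor condition is unpacked. The only steps demanding care are the translation of $\floor{\log_{2}\numtotalcomb}=n_0$ into the inequality $\numtotalcomb\ge 2^{n_0}$, and the integrality argument that permits replacing the real bound $1+x$ by $\floor{1+x}$ for integer $\separationValue$. If a converse were desired (that $\separationValueMax$ is the \emph{largest} admissible separation, not merely an upper bound), I would additionally verify that $\separationValue=\separationValueMax$ still satisfies $\numtotalcomb\ge 2^{n_0}$, which follows by reversing the same chain of inequalities.
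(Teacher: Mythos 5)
Your proof is correct and takes essentially the same route as the paper's: unpack the floor hypothesis into $\numtotalcomb\ge 2^{\floor{\log_{2}\binom{\numberOfShifts}{2}}}$, substitute the closed form from Theorem~\ref{th:numofCombinations}, and rearrange to bound $\separationValue$. If anything you are slightly more careful than the paper, which leaves implicit both the integrality-of-$\separationValue$ step justifying the final floor and the automatic companion bound $\numtotalcomb<2^{\floor{\log_{2}\binom{\numberOfShifts}{2}}+1}$.
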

\begin{proof}
If $\floor{\log_{2}{\binom{\numberOfShifts}{2}}}=\floor{\log_{2}{\numtotalcomb}}$,  $\log_{2}{\numtotalcomb}\geq\floor{\log_{2}{\binom{\numberOfShifts}{2}}}$ must hold. Hence, $\numtotalcomb\geq2^{\floor{\log_{2}{\binom{\numberOfShifts}{2}}}}$. By using Theorem~\ref{th:safisEquation}, it can be written as
\begin{align}
\numtotalcomb =\binom{\numberOfShifts}{2}-\numberOfShifts(\separationValue-1) \geq 2^{\floor{\log_{2}{\binom{\numberOfShifts}{2}}}}~.
\label{eq:numofComb}
\end{align}
The inequality \eqref{eq:numofComb} can be rearranged as $\separationValue\leq 1+\frac{\binom{\numberOfShifts}{2}-2^{\floor{\log_2{\binom{\numberOfShifts}{2}}}}}{\numberOfShifts}$, which implies that $ \separationValue\le\separationValueMax$.
\end{proof}

\begin{figure}[t]
	\centering
	{\includegraphics[width =3.2in]{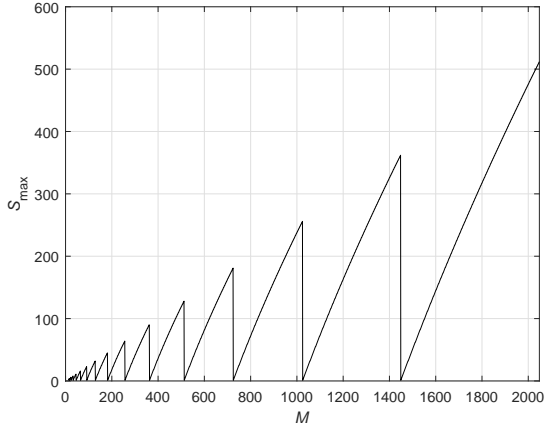}
	}
	\caption{$\separationValueMax$ versus $\numberOfShifts$. The distance between two indices can be as large as $\numberOfShifts/4$ without losing \ac{SE}.}
	\label{fig:smaxvsm}
\end{figure}
In \figurename~\ref{fig:smaxvsm}, we plot $\separationValueMax$ for a given $\numberOfShifts$. The surprising result is that the distance between indices can be as large as $\numberOfShifts/4$ without losing \ac{SE}. For instance, for $\numberOfShifts=2^{\naturalnum}$, where $\naturalnum\in\integersPositive$, $\separationValueMax$ reaches its maximum value, i.e., $\separationValueMax=\numberOfShifts/4$. On the other hand, we observe abrupt changes in $\separationValueMax$ for different values of $\numberOfShifts$. For example, $\separationValueMax$ becomes minimum, i.e., $\separationValueMax=1$, for $\numberOfShifts=2^{\naturalnum}+1$. This behavior is due to the fact that the number of bits that can be transmitted through the chirp indices increases by $1$ if $\numberOfShifts=2^{\naturalnum}$ increases by $1$.
The \ac{IS} guarantees a zone where the auto-correlation of the transmitted waveform is low, which improves the accuracy of the reflection coefficient estimation. The duration of the zone can be equal to a typical  \ac{CP} size, e.g., $\idftSize/4$, for certain values of $\numberOfShifts$ as $\separationValueMax/ \numberOfShifts = \CPSize/\idftSize$ can be maintained. 

The \ac{IS} can also improve the  \ac{RXr}  performance since it restricts the valid index combinations and reduce the interference between chirps when the \ac{MF} is employed at the receiver. Assuming that \ac{FDSS} is available at the \ac{RXr},  the received symbols in the frequency are first multiplied with the conjugate of the composite response (i.e., $\{\channelfreqresponse[\indexSubcarrier]^*\fourierSeries[\indexSubcarrier]^*\}$). The \ac{IDFT} of the processed vector is then calculated.  Let $(\dataSymbolAfterIDFTspread[0],\dataSymbolAfterIDFTspread[1],\mydots,\dataSymbolAfterIDFTspread[\numberOfShifts-1])$ be the modulation symbols after \ac{IDFT}. The \ac{ML} detector exploiting the \ac{IS} for $\numberofIndices=2$ can be given by
\begin{align}
    \{ \{\chirpmdetect, \chirpndetect\}, \symbolPSKdetect[1], \symbolPSKdetect[2]\} = \arg\max_{\substack{\{\chirpm, \chirpn\}, \symbolPSKdomain[1], \symbolPSKdomain[2]\\ \distanceIndex[{\chirpm,\chirpn}]\ge\separationValue}} \Re\left\{  \dataSymbolAfterIDFTspread[\chirpm]\symbolPSKdomain[1]^*+\dataSymbolAfterIDFTspread[\chirpn]\symbolPSKdomain[2]^*\right\}~,
    \label{eq:mldetectorIS}
\end{align}
where $\distanceIndex[{\chirpm,\chirpn}]\ge\separationValue$ reduces the search space. A low-complexity implementation of \eqref{eq:mldetectorIS} can be done as follows: 1) Obtain $\{i,k\}$ that maximizes $\Re\{ \dataSymbolAfterIDFTspread[i]\constante^{-\constantj2\pi\indexpsksymbol/\psksize}\}$ for $i\in\{0,1,\mydots,\numberOfShifts-1\}$  and $\indexpsksymbol\in\{0,1,\mydots,\psksize-1\}$ for the first index and the corresponding \ac{PSK} symbol. 2) Evaluate the same function all other indices such that $\distanceIndex[{i,\chirpn}]\ge\separationValue$ for detecting the second index  and the \ac{PSK} symbol.



\subsection{Solution 2: LMMSE-Based Channel Estimation}
Another solution is to remove the impact of the waveform by using the \ac{LMMSE} estimate of $\channelFVector$, i.e., $\channelFVectorEst=\completeMatrix^{\rm H}(\completeMatrix\completeMatrix^{\rm H}+\noiseVariance \identityMatrix)^{-1}\rxSymbolsVector$ in the range estimation, rather than the vector $\completeMatrix^{\rm H}\rxSymbolsVector$. For a single target, the \ac{ML} estimate of $\timeArrivalEst[1]$  can then be obtained as
\begin{align}
\timeArrivalEst[1] = \arg\max_{\timeArrivalSlack[1]} |\Re \{\delayVector[{\timeArrivalSlack[1]}]^{\rm H} \completeMatrix^{\rm H}(\completeMatrix\completeMatrix^{\rm H}+\noiseVariance \identityMatrix)^{-1}\rxSymbolsVector \}|~,
\end{align}
where $\reflectioncoefficientEst[1]=  \Re \{\delayVector[{\timeArrivalEst[1] }]^{\rm H} \completeMatrix^{\rm H}\rxSymbolsVector \}/(\symbolsVectorinFrequency^{\rm H}\symbolsVectorinFrequency+\noiseVariance)$.  For multiple targets, we also consider the iterative procedure in \eqref{eq:iterations}. 

The main disadvantage of this method is the \ac{SNR} degradation as compared to \ac{ML} as demonstrated in the numerical results in Section~\ref{sec:simulation}. This solution has no impact on \ac{DFRC} waveform design. On the other hand, it can also be utilized with \ac{IS} to improve the estimation accuracy.

\section{Numerical Results}
\label{sec:simulation}
In this section, we consider IEEE 802.11ay \ac{OFDM} mode with $4$ channels, where the center frequency is $\fcarrier=64.8$~GHz, $\fsample=10.56$~Gsps, $\idftSize=2048$, $\CPSize=512$, which lead to $\symbolDuration\approx194$~ns and $\CPDuration\approx48.48$~ns. We assume that $\numberOfShifts=1448$, $\upperFrequency=724$, $\lowerFrequency=-723$, and $\numberOfOccupiedSubcarriers=1300$. Therefore, the bandwidth of the signal is approximately $6.7$~GHz and  $\separationValueMax$ is equal to $362$. The maximum range of the radar is $7.27$~m.  We set $\psksize=2$. Therefore, $\numberofTxBits=11$, $21$, and $41$ information bits are transmitted for $\numberofIndices=1$, $2$, and $4$ chirps, respectively.
\begin{figure*}
	\centering
	\subfloat[Linear chirps, Scenario 1, $\numberofIndices=2$.]{\includegraphics[width =3.4in]{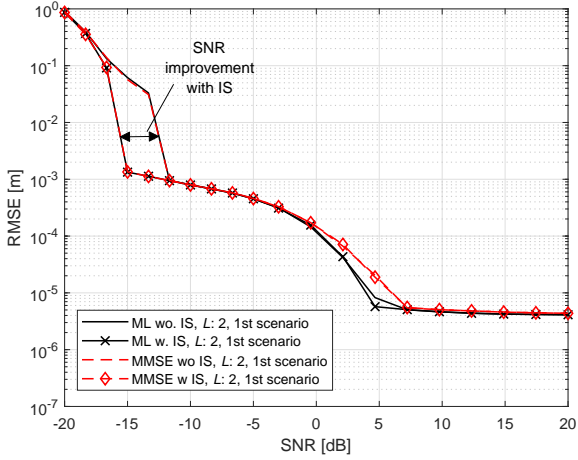}
		\label{subfig:lin_sce1}}~
	\subfloat[Linear chirps, Scenario 2, $\numberofIndices=2$.]{\includegraphics[width =3.4in]{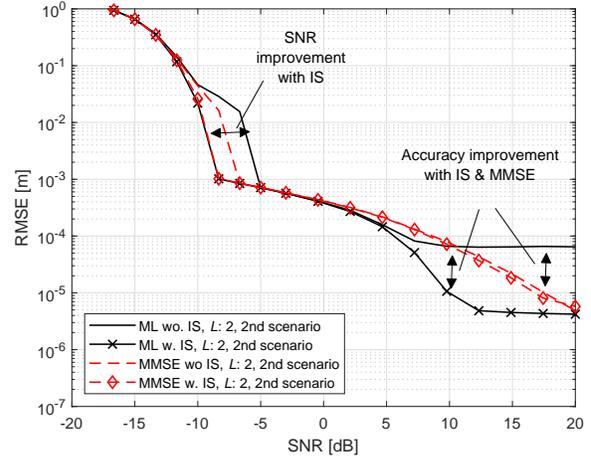}
		\label{subfig:lin_sce2}}
		\vspace{-3mm}\\
	\subfloat[Sinusoidal chirps, Scenario 2, $\numberofIndices=2$.]{\includegraphics[width =3.4in]{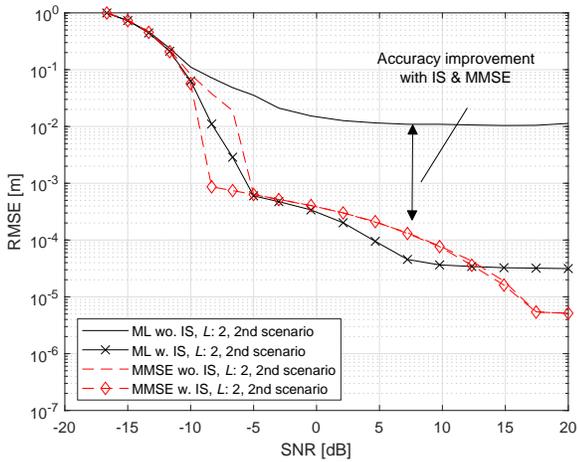}
		\label{subfig:sin_sce2}}~		
	\subfloat[Linear chirps, Scenario 2, $\numberofIndices=4$.]{\includegraphics[width =3.4in]{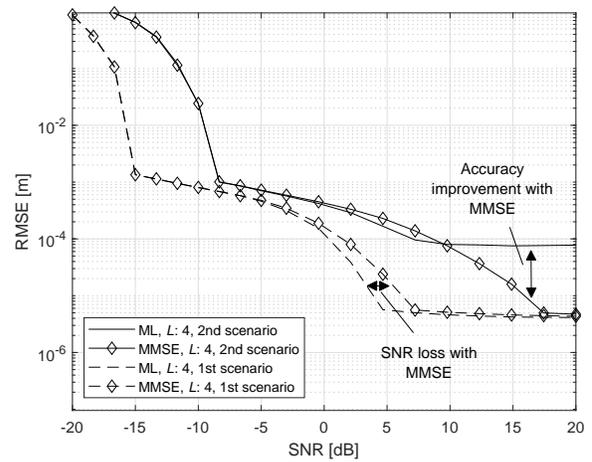}
		\label{subfig:lin_fourChirps}}~
	\caption{Impact of IS and MMSE on the RMSE versus SNR curves for different cases with \acp{CRC}.}
	\label{fig:rmse}
	\vspace{-2mm}
\end{figure*}
\subsection{Radar Performance}
We consider two scenarios for evaluating \ac{RXr} performance. In the first scenario, a single target is assumed. Its location is drawn from a uniform distribution between $0.5$~m and $6.5$~m and the reflection coefficient is set to $-1$ considering the phase change for the reflected signal \cite{tse_viswanath_2005}. For the second scenario, we consider two targets. While the first target is located at between $1.3$~m and $3.3$~m with the reflection coefficient of -$1$, the second target is between $3.6$~m and $5.6$~m with the reflection coefficient of -$0.5$. 

In \figurename~\ref{fig:rmse}, we provide the \ac{RMSE} versus \ac{SNR} curves with linear and sinusoidal chirps for $\numberofIndices=\{2,4\}$. In \figurename~\ref{fig:rmse}\subref{subfig:lin_sce1}-\ref{fig:rmse}\subref{subfig:sin_sce2}, we observe substantial improvements in both SNR and/or accuracy  when \ac{ML} estimation is combined with \ac{IS}. Since the \ac{IS} eliminates the combinations where two indices are closed to each other, it avoids the spikes due to the waveform within the desired range. For Scenario 1, as shown in \figurename~\ref{fig:rmse}\subref{subfig:lin_sce1}, it provides approximately 3~dB \ac{SNR} gain  at low \acp{SNR}. Since there is only one target in this scenario, the cancellation in \eqref{eq:iterations} does not occur. Therefore, there is no difference in terms of accuracy at high \ac{SNR} among the methods. \ac{IS} also provides \ac{SNR} gain when it is utilized with \ac{LMMSE}-based method. For Scenario~2, as in \figurename~\ref{fig:rmse}\subref{subfig:lin_sce2}, the \ac{IS} improves the accuracy as the reflection coefficients are estimated more accurately. \ac{LMMSE} also improves the accuracy at the expense of a large \ac{SNR} loss. In \figurename~\ref{fig:rmse}\subref{subfig:sin_sce2}, we repeat the simulation for sinusoidal chirps. Without removing the impact of the waveform, the \ac{RMSE} increases dramatically. However,  the accuracy improves  with \ac{LMMSE}-based estimation or \ac{IS}. 
In \figurename~\ref{fig:rmse}\subref{subfig:lin_fourChirps}, we analyze the impact of $\numberofIndices=4$ chirps on \ac{RMSE} without \ac{IS}. \ac{LMMSE}-based estimation is superior to the one with \ac{ML} in terms of accuracy for Scenario 2 while it causes 2-3~dB \ac{SNR} loss for Scenario 1.  
\subsection{Communication Performance}
In \figurename~\ref{fig:berBler}, the impact of \ac{IS} on error-rate is analyzed for linear and sinusoidal chirps under \ac{AWGN} channel and fading channel (i.e., three paths where the power delay profile is 0~dB, -10~dB, -20~dB at 0~ns, 10~ns, and 20~ns with Rician factors of 10, 0, and 0, respectively).  In both configurations, \ac{BLER} and \ac{BER} improve slightly (i.e., 0.3~dB) when \ac{IS} is employed. 
\begin{figure}[t]
	\centering
	\subfloat[BER comparison.]{\includegraphics[width =3.4in]{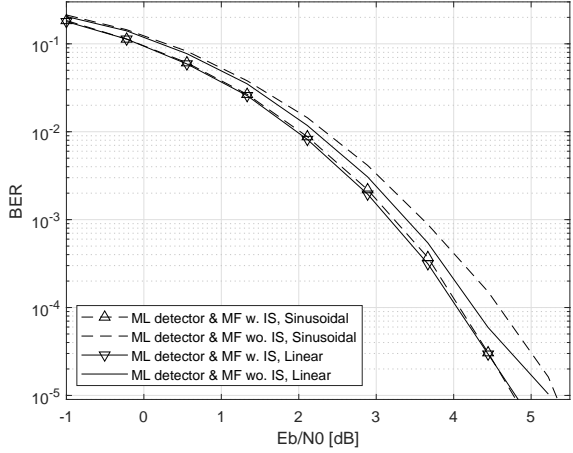}
		\label{subfig:ber}}\\
    \vspace{-3mm}
	\subfloat[BLER comparison.]{\includegraphics[width =3.4in]{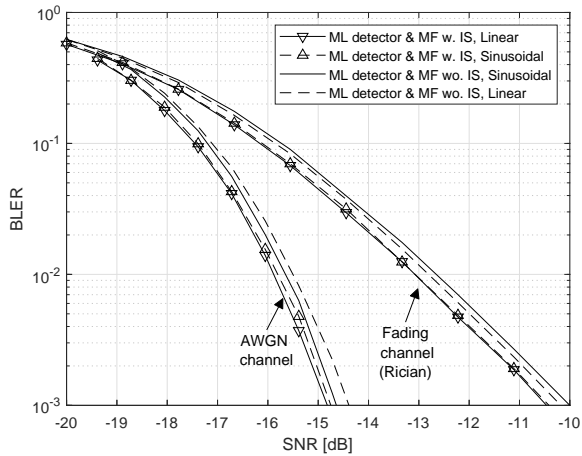}
		\label{subfig:bler}}
	\caption{RXc performance with and without the IS.}
	\label{fig:berBler}
\end{figure}

We also measurement maximum \ac{PMEPR} for linear and sinusoidal chirps for $\numberofIndices=\{1,2,4\}$. While the maximum \acp{PMEPR} are $2.7$, $4.6$, and $6.6$~dB for linear chirps, they are $0$, $3$ and $6$~dB for sinusoidal chirps for $\numberofIndices=\{1,2,4\}$, respectively. The reason why linear chirp diverges from the theoretical limit is the heavy truncation of \ac{FDSS} in the frequency domain  \cite{Safi_2020}.

\section{Conclusion}
\label{sec:conclusion}
In this study, we analyze \acp{CRC} for \ac{DFRC} systems and develop various range estimators for multiple targets. As the main contribution, we propose \ac{IS} which separates the \acp{CRC} apart in time. We theoretically obtain the maximum separation for $\numberofIndices=2$ without sacrificing \ac{SE}. The limit indicates that the separation can be large under certain conditions. With numerical results, we show that the \ac{IS} combined \ac{ML} and \ac{LMMSE} can provide approximately 3~dB \ac{SNR} gain while improving estimation accuracy substantially. Also, we demonstrate that \ac{IS} can slightly improve the \ac{BER} performance. 
 As future work, the study will be extended by considering the mobility in the environment with a realistic reflection model and multiple \ac{TX} and \acp{RXr} into account through a bi-static radar configuration. 

\bibliographystyle{IEEEtran}
\bibliography{references}

\end{document}